\numberwithin{equation}{section}
\titleformat{\section}{\normalfont\large\bfseries}{\thesection}{1em}{#1}
\titleformat{\subsection}{\normalfont\normalsize\bfseries}{\thesubsection}{1em}{#1}
\titleformat{\subsubsection}{\normalfont\normalsize\itshape}{\thesubsubsection}{1em}{#1}
\titlespacing\section{0pt}{12pt plus 4pt minus 2pt}{6pt plus 2pt minus 2pt}
\titlespacing\subsection{0pt}{12pt plus 4pt minus 2pt}{3pt plus 2pt minus 3pt}
\titlespacing\subsubsection{0pt}{12pt plus 4pt minus 2pt}{0pt plus 2pt minus 3pt}
\def\boxit#1{\vbox{\hrule\hbox{\vrule\kern6pt
          \vbox{\kern6pt#1\kern6pt}\kern6pt\vrule}\hrule}}
\definecolor{orange}{rgb}{1,0.5,0}
\definecolor{MyDarkBlue}{rgb}{0,0.08,0.45}
\definecolor{MyDarkGreen}{RGB}{0,100,0}
\newtheorem{lemma}{Lemma}[section]
\def\boxit#1{\vbox{\hrule\hbox{\vrule\kern6pt
          \vbox{\kern6pt#1\kern6pt}\kern6pt\vrule}\hrule}}
\definecolor{orange}{rgb}{1,0.5,0}
\definecolor{MyDarkBlue}{rgb}{0,0.08,0.45}
\begin{document}
\title{\Large \bfseries Parameter Estimation for Partially Observed Stable Continuous-State Branching Processes\thanks{We thank DGAPA-PAPIIT-UNAM grant IN105726 for partial financial support.}
 } 
 \author[a]{Eduardo Gutiérrez-Peña\thanks{Corresponding author. \vspace{0.2em} \newline
{\mbox{\hspace{0.47cm}} \it Email addresses:} 
{eduardo@sigma.iimas.unam.mx} (Eduardo Gutiérrez-Peña), \href{mailto:carlos.pmendoza3@gmail.com},
(Carlos Pérez Octavio Mendoza),
\href{mailto:eduardo@sigma.iimas.unam.mx}{carlos.octavio.perez92@gmail.com} , 
\href{mailto:alan@sigma.iimas.unam.mx}{alan@sigma.iimas.unam.mx} (Alan Riva Palacio),
\href{mailto:arno@sigma.iimas.unam.mx}{arno@sigma.iimas.unam.mx} (Arno Siri-Jégousse)
. }}
\author[b]{Carlos Octavio Pérez-Mendoza}
\author[a]{Alan Riva Palacio}
\author[a]{Arno Siri-Jégousse}

\affil[a]{{\small Instituto de Investigaciones en Matemáticas Aplicadas y en Sistemas, UNAM, Mexico City, Mexico}}
\affil[b]{{\small Concordia University, Department of Mathematics and Statistics, Montr\'eal, Canada}}

\vspace{-10pt}
\date{ 
\today}


\maketitle 

%


\begin{abstract}
\vspace{-9pt}
	
In this article, we present a novel inference framework for estimating the parameters of Continuous-State Branching Processes (CSBPs). We do so by leveraging their subordinator representation. 
Our method reformulates the estimation problem by shifting the stochastic dynamics to the associated subordinator, enabling a parametric estimation procedure without requiring additional assumptions. This reformulation allows for efficient numerical recovery of the likelihood function via Laplace transform inversion, even in models where closed-form transition densities are unavailable. In addition to offering a flexible approach to parameter estimation, we propose a dynamic simulation framework that generates discrete-time trajectories of CSBPs using the same subordinator-based structure.

\bigskip 



\noindent \textbf{Keywords:}  Continuous-State Branching Process, Parameter estimation, Subordinator representation.
\end{abstract}

\medskip



\doublespacing

\setcounter{page}{1}
\pagenumbering{arabic}

\newpage


\section{Introduction}\label{se:intro}

Continuous-state branching processes (CSBPs) have been the subject of intense study since their introduction by \cite{Lam67} and \cite{Gre74}, among other authors.
They provide, together with their modified versions, classical models in biology and finance, in particular their continuous path (diffusive) version such as
the Feller diffusion in evolutionary biology and the Cox–Ingersoll–Ross (CIR) process in finance; see \cite{Lam08} or \cite{Li20} for surveys.
These processes provide a flexible framework to capture growth, extinction, and variability in evolving systems governed by randomness.

The use of Feller diffusion in biology is motivated by the branching property. Indeed, it is often assumed that a population evolves as the sum of i.i.d.\ subpopulations.
In finance, the CIR diffusion is heavily used, in part because it provides explicit solutions to important pricing problems.
The only difference between those two models is a linear term that can be easily added to Feller diffusion considering immigration in population's evolution (\cite{Li20}).

For some time, the branching nature of the CIR diffusion, much as its possible connection with Lévy processes (\cite{bertoin, CPU}), was largely unexplored.
The advantages of this property were recently highlighted by \cite{LM10, JMS17, Li20}, in particular when it became necessary to consider modifications of the CIR diffusion that include jumps.
This is the purpose of the $\alpha$-CIR process, which provides a parametric model for interest rates based on the $\alpha$-stable version of CSBPs.
In \cite{JMS17}, this model was used for bond pricing and a maximum likelihood estimator of the parameters was obtained and studied asymptotically in the case where the time between observations tends to 0.

In biology, CSBPs and their generalizations are commonly used to model the evolution of the total size of a population for certain phenomena, see e.g.\ \cite{Lam05, MPS17, GCPP21}. Most statistical applications have been developed for their discrete version, being modifications of the Galton-Watson process, as in \cite{Jan76, DY94, Extremadura13, Extremadura21}. 
For continuous-state and continuous-time processes, the problem of estimating model parameters from discrete-time observations remains challenging, particularly when the underlying process involves jumps or heavy tails, as in the case of $\alpha$-stable branching mechanisms. Estimation becomes even more involved when closed-form density expressions are unavailable.





In this work, we develop a likelihood-based parameter estimation framework for CSBPs observed at discrete time intervals. Our approach is based on a probabilistic representation of the process via its associated subordinator, which enables us to recover the likelihood function using numerical inversion of the Laplace transform. This method accommodates a wide range of branching mechanisms and does not require explicit forms for the transition densities. We show that the procedure yields consistent parameter estimates as the number of observations increases, offering a principled route for parameter estimation in these models.

In addition to maximum likelihood estimation, we propose a sampling-based Bayesian method that provides Monte Carlo samples from the posterior distribution of the parameters. These samples can be used to produce point and interval estimates, or any other form of inference.
To validate our methodology, we present a series of numerical experiments based on simulated data. We also describe a simulation scheme for generating discrete-time sample paths of CSBPs when only the branching mechanism is known, a tool that is useful both for testing and for potential applications in synthetic data generation.

The remainder of the paper is organized as follows. In Section \ref{se:the_model}, we introduce the class of CSBPs under consideration. Section \ref{se:inversion} presents the numerical inversion of the Laplace transform, along with the simulation method and likelihood construction. Section \ref{se:parameter_estimation} develops the parameter estimation strategy, with a detailed discussion of the $\alpha$-stable case. In Section \ref{se:experiments}, we illustrate our procedure using simulated datasets. Finally, Section \ref{se:conclusion} offers concluding remarks and outlines directions for future work.

\section{The model}\label{se:the_model}

A continuous-state branching process (CSBP) is a $\mathbb R_+$-valued Markov process $X=(X_{t},t\geq 0)$ that fulfills the branching property: its Laplace transform satisfies, for any $x,y\geq0$,

 $$
    \mathbb{E}\left[\exp(-\lambda X_{t}) \mid X_{0}=x+y \right] =     \mathbb{E}\left[\exp(-\lambda X_{t}) \mid X_{0}=x \right] \mathbb{E}\left[\exp(-\lambda X_{t}) \mid X_{0}=y \right].
    $$
In words, the process starting at state $x+y$ is the sum of two independent copies of the process started at $x$ and $y$.
We refer to \cite{Kyp14} for a complete survey.
The branching property entails that its Laplace transform has the form
\begin{equation*}
    \mathbb{E}\left[\exp(-\lambda X_{t}) \mid X_{0}=x \right] = \exp\{ -xu_{t}(\lambda)\},
\end{equation*}

where $u_{t}$ is the solution of the equation
\begin{equation}\label{eq:ut}
    \frac{\partial}{\partial t}u_{t}(\lambda)=-\Psi (u_{t}(\lambda))
\end{equation}
and $\Psi:[0,\infty) \rightarrow \mathbb{R}$ is a convex function with the Lévy-Khintchine representation
\begin{equation}
    \Psi (u)=\gamma u+\frac{\sigma^2}{2} u^{2}+\int_{\mathbb{R}_{+}}(e^{ -xu} -1+xu\mathbb{I}_{\{x\leq 1\}})\Pi(dx)
\end{equation}
for $\gamma \in \mathbb{R}$, $\sigma \geq 0$ and $\Pi$ a measure on $\mathbb{R}_{+}$ such that $\int (1\wedge x^2)\Pi(dx)<\infty$.
The measure $\Pi$ encodes the jumps of the stochastic process. Indeed  CSBPs admit a stochastic differential representation given by

\begin{equation*}
    dX_{t} = \gamma X_{t}dt + \sigma\sqrt{X_t}dW_t+\int_{(0,t)}\int_{(0,\infty)}\int_{(0,X_{s-})} x\tilde{N}(ds,dx,dv)
\end{equation*}
with $W$ a standard Brownian motion and $\tilde{N}(ds,dx,dv)$ a Poisson point process on $[0,\infty)^{3}$ with intensity $ds\otimes \Pi(dx)\otimes dv$. From this representation, we can see that CSBPs generalize in a very natural way the CIR process, whose SDE is
\begin{equation*}
    dX_{t} = a(b- X_{t})dt + \sigma\sqrt{X_t}dW_t,
\end{equation*}
by adding a jump part to the classical diffusion.
The missing linear part is easily added by considering CSBPs with immigration, see \cite{LM10, JMS17}. 
We will focus on the case $b=0$ since we are mainly interested in the statistical inference encoded by the jump part of the process.

We are interested in stable CSBPs obtained when $\sigma=0$ (no diffusive part) and $\Pi(dx)=c(\alpha,\beta)x^{-1-\alpha}dx$ {for a well chosen constant $c(\alpha,\beta)$} depending on $\alpha\in(1,2)$ and $\beta>0$.
In this case, the branching mechanism simplifies to $$\Psi(u)=\beta u^{\alpha}+\gamma u.$$
Note that the trajectories are given by (small and accumulating) jumps since $\int (1\wedge x)\Pi(dx)=\infty$. The jumps get very small and the trajectories converge to those of a diffusion process as $\alpha\to2$. The jumps become larger when $\alpha$ gets closer to 1.

By plugging $\Psi$ into \eqref{eq:ut}, we can see that $u_{t}$ is a solution to the equation

\begin{equation*}
    \frac{\partial}{\partial t} u_{t}(\lambda) = - \beta u_{t}(\lambda)^{\alpha} - \gamma u_{t}(\lambda).
\end{equation*}
It is thus easy to check that
{\begin{equation}\label{eq:utsol}
u_{t}(\lambda) =  e^{ -\gamma t }\left( \lambda^{1-\alpha} + \frac{\beta}{\gamma}(1-e^{\gamma(1-\alpha)t } )\right)^{\frac{1}{1-\alpha}}.
\end{equation}}
This is an explicit formulation for the Laplace transform of $X_t$ for any $t\geq0$; see Example 3.1 in  \cite{Li12}.

CSBPs are called subcritical (resp.\ supercritical) if $\Psi'(0+)<0$ (resp.\ $>0$).
Subcritical processes become extinct (they reach 0 in finite time) with probability 1.
In our case of study, criticality is determined by the sign of 
$\gamma$.
In any case, the probability of being already extinct at a certain time $t$ is
{\begin{equation}\label{eq:probability}
   p_t(0)= \mathbb{P}\left( X_{t}=0 \mid X_{0}=x \right)=e^{-xu_t(\infty)}=
\exp\left\{-xe^{ -\gamma t }\left(  \frac{\gamma}{\beta(1-e^{\gamma(1-\alpha)t } )}\right)^{\frac{1}{\alpha-1}}\right\}.
\end{equation}}
This is important for the calibration of parameters, as we will usually try to control this quantity when observing positive realizations of $X_t$. Furthermore, we note that 
$$
\mathbb{E}\left[ e^{-\lambda X_t} \,\big|\, X_0=x\right] = p_t(0) + (1-p_t(0))\mathbb{E}\left[ e^{-\lambda X_t} | X_t \neq 0,\, X_0=x\right],
$$
so we can obtain the Laplace transform of the process conditioned on not being zero.

There exists a remarkable connection between CSBPs and subordinators \cite{bertoin}.
Conditional on $X_0=x$, the random variable $X_t$ has the same law as $S_x$ where the process $S=(S_x,x\geq0)$ is a subordinator with Lévy exponent $u_t$. That is

\begin{equation}\label{eq:lapt}
\mathbb{E}\left[ e^{-\lambda S_x} \right]=e^{-xu_t(\lambda)}.    
\end{equation}
This correspondence is convenient for simulation purposes. Thanks to the Markov property, we will be able to obtain discretized trajectories of a CSBP $(X_{\delta}, X_{2\delta}, \dots, X_{k\delta})$ by simulating independent realizations of the associated subordinator at different times $(S_{X_0}, S_{X_\delta}, \dots, S_{X_{(k-1)\delta}})$ for a time-step $\delta\geq0$.
The independent and stationary increments of Lévy processes make them easier objects to simulate, see e.g. \cite{App09}. 


\section{Laplace transform inversion and simulation framework}\label{se:inversion}

This section introduces the methodology used to retrieve the density function of the subordinator $(S_{t},{t \geq 0})$ at any given time $t$. We also describe the simulation scheme employed in our numerical experiments.

\subsection{Numerical inversion of the Laplace transform}\label{se:laplace_inversion}

To recover the probability density function of the subordinator $(S_{t}, {t \geq 0})$, we rely on the representation introduced by \cite{bertoin}. In particular, conditional on $X_{(k-1)\delta} = x$, the distribution of the subordinator $(S_{t}, {t \geq 0})$ associated with $X_{k\delta}$ is characterized at time $t = x$ via its Laplace transform as
\begin{equation*}
    \mathbb{E}\left[ e^{-\lambda S_{x}} \right] = e^{-xu_{\delta}(\lambda)}.
\end{equation*}

This representation allows us to approximate the density function of $S_t$ via numerical inversion of its Laplace transform. This approach is particularly convenient when the subordinator arises from an $\alpha$-stable jump process, as the corresponding probability density function does not admit a closed-form expression. In such cases, numerical inversion offers an efficient and flexible way to obtain accurate density estimates without relying on computationally expensive simulation techniques.

To carry out this inversion, we adopt the algorithm proposed by \cite{abate2000introduction}, which is well-suited for situations where the Laplace transform is known but its inverse is not analytically tractable. Let $f^{S}_t(s)$ and $F^{S}_t(s)$ denote the density and distribution function of the subordinator at time $t$, respectively. These are recovered from the Laplace transform $\mathcal{L}\{f^{S}_{t}\}(\lambda) = e^{-t\,u_\delta(\lambda)}$ using a numerical quadrature approach based on the trapezoidal rule with step size $h$ applied to the Bromwich integral
\begin{align*}
f_t^S(s)& = \frac{1}{2\pi i} \int_{r - i \infty}^{r + i \infty}
e^{ \lambda s - t u_{\delta}(\lambda) }\mathrm{d}\lambda,
\end{align*}
where $r>0$ must be greater than the real part of all the singularities of $u_\delta(\lambda)$.

The numerical inversion method of \cite{abate2000introduction} is controlled by four parameters: integers $\ell$, $m$, $n$ and $A>0$, which govern the trade-off between computational cost and accuracy. For evaluation of the inverse at value $s$, the quantity $A=2 r \ell s $ determines the aliasing error, which decays exponentially with larger values, while $h=\pi / \ell s$ is the step size of the trapezoidal rule and $n,m$ determine a Euler summation scheme for truncation of the series in the quadrature scheme. As established in \cite{ridout2009generating}, the error from using the trapezoidal rule to approximate the Bromwich integral is bounded by $e^{-A}/(1-e^{-A})$, which is approximately $10^{-8}$ for $A = 18.4$. This framework has been successfully applied to $\alpha$-stable distributions, yielding accurate results. In our work, we have to take special care with the choice of $A$ as the Laplace transforms we consider present singularities with positive real parts for our choice of $1<\alpha<2$. 

\begin{lemma}\label{lema3.1}
The singularities corresponding to the Laplace transform  \eqref{eq:lapt} with $u_t$ in \eqref{eq:utsol} are given by
$$
\left\{  z_n= \left( \frac{\gamma}{\beta\left(1-e^{-\gamma(\alpha-1)t}\right)} \right)^{\frac{1}{\alpha-1}}e^{ \frac{(\pi+2\pi n)}{\alpha-1}i } \,:\, n\in \mathbb{Z}\right\}
$$
\end{lemma}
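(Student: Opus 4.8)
The plan is to push the question from the Laplace transform $\lambda\mapsto\mathbb E[e^{-\lambda S_x}]=e^{-xu_t(\lambda)}$ onto the exponent $u_t$ itself. Since $z\mapsto e^{-z}$ is entire and never vanishes, and $x>0$ is a fixed constant, the map $\lambda\mapsto e^{-xu_t(\lambda)}$ is holomorphic exactly where $\lambda\mapsto u_t(\lambda)$ is holomorphic, and it carries a singularity at each singular point of $u_t$ (a branch point of $u_t$ stays a branch point, a point where $u_t$ blows up becomes an essential singularity). So it suffices to locate the singularities of the explicit function $u_t$ in \eqref{eq:utsol}.

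Next I would strip \eqref{eq:utsol} to its algebraic skeleton. With $C:=\frac{\beta}{\gamma}\bigl(1-e^{\gamma(1-\alpha)t}\bigr)$ we have $u_t(\lambda)=e^{-\gamma t}\bigl(\lambda^{1-\alpha}+C\bigr)^{1/(1-\alpha)}$, a composition of the power maps $\zeta\mapsto\zeta^{1-\alpha}$ and $w\mapsto w^{1/(1-\alpha)}$ with the affine shift $w\mapsto w+C$ in between. Because $1<\alpha<2$, both exponents $1-\alpha$ and $1/(1-\alpha)$ are negative and non-integer, so each power map is holomorphic on $\mathbb C\setminus\{0\}$ with algebraic branch points only at $0$ and $\infty$. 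Hence the finite singularities of $u_t$ can occur only where the inner power is singular, i.e. at $\lambda=0$, or where the outer power is singular, i.e. where $g(\lambda):=\lambda^{1-\alpha}+C$ equals $0$ or $\infty$; the value $\infty$ is attained only at $\lambda\in\{0,\infty\}$ (already accounted for, and $g(\infty)=C\neq0$ gives the finite value $u_t(\infty)$ of \eqref{eq:probability}). The point $\lambda=0$ lies on the imaginary axis and, as will be clear below, strictly inside the circle on which the zeros of $g$ sit, so it is immaterial for the inversion contour; the content of the lemma is therefore the zero set of $g$.

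It remains to solve $\lambda^{1-\alpha}=-C$. First note that $C>0$: writing $c:=\gamma(\alpha-1)t$ and using $e^{\gamma(1-\alpha)t}=e^{-c}$ gives $C=\beta(\alpha-1)t\,\frac{1-e^{-c}}{c}$, and $\frac{1-e^{-c}}{c}>0$ for every real $c\neq0$; hence $-C$ is a negative real number. Writing $\lambda=\rho e^{i\theta}$ with $\rho>0$ and reading $\lambda^{1-\alpha}$ as the full multivalued power, its values are $\rho^{1-\alpha}e^{i(1-\alpha)(\theta+2\pi k)}$, $k\in\mathbb Z$. For $-C$ to be among them we need $\rho^{1-\alpha}=C$, i.e. $\rho=C^{1/(1-\alpha)}=(1/C)^{1/(\alpha-1)}=\bigl(\tfrac{\gamma}{\beta(1-e^{-\gamma(\alpha-1)t})}\bigr)^{1/(\alpha-1)}$ --- the modulus in the statement --- together with $(1-\alpha)(\theta+2\pi k)=\pi+2\pi\ell$ for some $k,\ell\in\mathbb Z$, i.e. $\theta\in\bigl\{\tfrac{\pi+2\pi\ell}{1-\alpha}:\ell\in\mathbb Z\bigr\}$ modulo $2\pi$. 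Since $\tfrac{\pi+2\pi\ell}{1-\alpha}=-\tfrac{\pi+2\pi\ell}{\alpha-1}$ and $\{-(\pi+2\pi\ell):\ell\in\mathbb Z\}=\{\pi+2\pi n:n\in\mathbb Z\}$, these arguments are exactly $\tfrac{\pi+2\pi n}{\alpha-1}$, $n\in\mathbb Z$, so the solution set is $\bigl\{\rho\,e^{i(\pi+2\pi n)/(\alpha-1)}:n\in\mathbb Z\bigr\}$, which is precisely $\{z_n\}$.

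I expect the main obstacle to be conceptual rather than computational: making precise what ``singularity'' should mean for the multivalued object $e^{-xu_t}$, i.e. justifying that its ramification locus over $\mathbb C$ is exactly $\{0,\infty\}\cup\{z_n\}$ and that only the $z_n$ matter for the contour condition $r>\max_n\Re z_n$. The algebraic identification of the $z_n$ once the problem is reduced to $\lambda^{1-\alpha}=-C$ is routine, though one should keep track of the sign of $C$ uniformly over $\gamma\in\mathbb R\setminus\{0\}$, since the subcritical and supercritical regimes are handled at once.
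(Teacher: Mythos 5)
Your proposal is correct and follows essentially the same route as the paper: both reduce the problem to locating the points where $\lambda^{1-\alpha}+\tfrac{\beta}{\gamma}\bigl(1-e^{-\gamma(\alpha-1)t}\bigr)$ vanishes and then solve $\lambda^{1-\alpha}=-C$ for the multivalued power. Your write-up supplies some justification the paper leaves implicit (the transfer of singularities from $u_t$ to $e^{-xu_t}$, the uniform positivity of $C$ across the subcritical and supercritical regimes, and the additional branch point at $\lambda=0$), but the substance of the argument is identical.
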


For the remaining parameters, we chose values $\ell = 1$, $m = 11$, and $n = 38$, which provide a good balance between accuracy and computational efficiency. Although exact error bounds are not shown here, the reliability of the method when closed-form densities are unavailable can be empirically verified by comparing with theoretical results such as the analytical formulas for the mean of $f_t^S$, limit values at zero and infinity provided by the initial and final value theorems for Laplace transforms, and the corresponding rates of convergence provided by Tauberian theorems. We present such results and the proof of Lemma \autoref{lema3.1} in \autoref{appendix}.

\subsection{Simulation approach for CSBP dynamics}\label{subsec:simulation_path}

The simulation of a discrete path of the CSBP is performed through independent realizations of its associated subordinator at different time steps. Given an initial value $X_0$, a simulated path of the CSBP with step size $\delta$ is given by the mapping
\begin{equation}
X_{i\delta} = S_{X_{(i-1)\delta}},
\end{equation}
where $X_{(i-1)\delta}$ is the value of the CSBP at the previous time step. This means that, once $X_{(i-1)\delta}$ is known, the value of $X_{i\delta}$ can be simulated as the subordinator $S_t$ evaluated at $t = X_{(i-1)\delta}$.

The simulation of the subordinator $S_t$ is performed by drawing $S_t=0$ with probability $p_t(0)$ given by \eqref{eq:probability} and using the Laplace transform inversion described in the previous Section \ref{se:laplace_inversion} for
\begin{equation}\label{eq:laplace_transforme}
    \mathbb{E}\left[ e^{-\lambda X_t} \,\big|\, X_t \neq 0,\, X_0=x\right] = \frac{e^{-xu_t (\lambda)}-p_t(0)}{1-p_t(0)}.
\end{equation}
Our method consists in using inverse transform sampling by solving the root of 
\begin{equation}
F^{S}_{t}(s) = \mathbb{P}\left[ S_t \leq s \right] = u.
\end{equation}
where $F^{S}_{t}(s)$ is obtained numerically using the \cite{abate2000introduction} method, and $u \sim U(0,1)$ is a uniform random variate. This approach follows the RLAPTRANS algorithm introduced by \cite{ridout2009generating}. As shown in \cite{ridout2009generating}, the inversion method provides a reliable and efficient source of random numbers, with minimal user input and good performance even for complex distributions such as $\alpha$-stable laws. Our method follows the same principles and offers a similarly robust mechanism for simulating subordinators.


\section{Parameter estimation procedure}\label{se:parameter_estimation}

In this section, we describe the methodology used to estimate the parameters of a CSBP based on a discrete-time sample. We begin by outlining a general likelihood-based framework that leverages the Markovian structure of the process and the Laplace inversion techniques discussed previously. Then, we present a tailored two-step estimation strategy designed to address the lack of identifiability specific to the $\alpha$-stable CSBP model studied in this work.

\subsection{Likelihood-based estimation via subordinator densities}

Our parameter estimation procedure is based on recovering the likelihood function associated with the observed sample of the process. Leveraging the Markov property of the CSBP, the joint likelihood can be factorized into a product of conditional densities. These conditional densities correspond to the law of the subordinators associated with the process and can be estimated using the Laplace inversion techniques described in Section~\ref{se:laplace_inversion}.

Suppose we are given a discrete-time observation of the CSBP, $(X_{0},X_{\delta}, X_{2\delta}, \dots, X_{k\delta})$. By the Markov property, the joint density of the process can be expressed as
\begin{equation}
    f(X_{\delta}, X_{2\delta}, \dots, X_{k\delta}\mid X_{0}\,;\theta) = \prod_{i=1}^{k} f(X_{i\delta} \mid X_{(i-1)\delta}\,; \theta),
\end{equation}
where $\theta$ denotes the vector of parameters of the CSBP. For each $i = 1, \dots, k$, the Laplace transform of the conditional density $f(X_{i\delta} \mid X_{(i-1)\delta}; \theta)$ is given by
\begin{equation*}
    \mathbb{E}\left[\exp(-\lambda X_{i\delta}) \mid X_{(i-1)\delta} = x \right] = e^{-x\,u_\delta(\lambda)},
\end{equation*}
where $u_\delta(\lambda)$ is the Laplace exponent associated with the CSBP over time interval $\delta$. These expressions coincide with those obtained for the subordinators $(S_{X_0}, S_{X_\delta}, \dots, S_{X_{(k-1)\delta}})$, and thus the conditional densities can be recovered by applying the same numerical inversion methods discussed previously.

Consequently, the likelihood function for estimating $\theta$ is approximated by
\begin{equation}
    \mathcal{L}(\theta \mid X_{0},X_{\delta}, X_{2\delta}, \dots, X_{k\delta}) \approx \prod_{i=1}^{k} f^{S}_{X_{(i-1)\delta}}(X_{i\delta}\,; \theta),
\end{equation}
where $f^{S}_{X_{(i-1)\delta}}(X_{i\delta}\,; \theta)$ denotes the numerically recovered density, evaluated at $X_{i\delta}$, of the marginal law of the subordinator at time $X_{(i-1)\delta}$. The parameter vector $\theta$ is then estimated by maximizing the approximated likelihood function using standard maximum likelihood techniques. Formally, the estimator $\theta^{*}$ is defined as
\begin{equation*}
    \theta^{*} = \underset{\theta \in \Theta}{\arg\max} \; \mathcal{L}(\theta \mid X_{0},X_{\delta}, X_{2\delta}, \dots, X_{k\delta}),
\end{equation*}
where $\Theta$ denotes the parameter space.

\subsection{Two-step optimization for the $\alpha$-stable CSBP}

While the previous subsection outlines a general and flexible framework for parameter estimation in CSBPs, its direct application to specific models may present additional challenges. In the case of the $\alpha$-stable CSBP considered in this work, maximization of the likelihood function is complicated by identifiability issues among the model parameters. As a result, a two-step estimation procedure is required to ensure convergence toward stable and interpretable parameter values.\footnote{An example illustrating this issue in practice is provided in section~\ref{subsub:identifiability}, where we show how direct likelihood maximization over $(\gamma, \beta, \alpha)$ fails to recover the true parameters due to lack of identifiability.} These challenges are inherently model-dependent, and for other families of CSBPs, the complexity of the estimation task may vary depending on the analytical properties of the branching mechanism and the behavior of the associated subordinator.

The parameters to be estimated in the $\alpha$-stable CSBP model are $\gamma$, $\beta$, and $\alpha$, which together form the parameter vector $\theta = (\gamma, \beta, \alpha) \in \mathbb{R} \times \mathbb{R}_{+} \times (1,2)$. The parameter $\alpha$ governs the frequency and magnitude of the jumps, and therefore plays a central role in shaping the tail behavior of the marginal distributions. In contrast, the parameters $\gamma$ and $\beta$ determine the long-term dynamics of the process, influencing whether the population tends toward extinction or unbounded growth.

Because of the dominant role of $\alpha$ in determining the overall distributional structure, and due to the identifiability interactions among parameters, we adopt a two-step estimation strategy. This approach is reminiscent of the two-stage estimation procedure often applied to Student-$t$ distributions, where the degrees of freedom, which control tail heaviness, are estimated separately after obtaining preliminary estimates for the location and scale parameters. In our case, we first estimate $(\gamma, \beta)$ conditionally on $\alpha$, and subsequently optimize over $\alpha$ using the resulting likelihood surface. This sequential strategy improves numerical stability and leads to more reliable convergence of the estimation process.

With this in mind, our estimation procedure approximates the likelihood surface by discretizing the domain of $\alpha$ and solving the conditional optimization problem at each point of the resulting partition $P = \langle \alpha_{1}, \dots, \alpha_{m} \rangle$, for $n \in \mathbb{N}$. For a fixed value $\alpha_j$, we solve

\begin{equation}\label{eq:estimation_gamma_beta}
(\gamma_{j}, \beta_{j}) = \underset{(\gamma, \beta) \in \mathbb{R} \times \mathbb{R}_{+}}{\arg\max}  \mathcal{L}((\gamma, \beta, ; \alpha_{j}) \mid X_{0},X_{\delta}, X_{2\delta}, \dots, X_{k\delta}),
\end{equation}

and then identify the parameter triplet $(\gamma_{J}, \beta_{J}, \alpha_{J})$ that maximizes the overall likelihood, where the index $J$ is given by

\begin{equation}\label{eq:full_estimation}
J = \underset{j \in \mathbb{N}}{\arg\max} ; \{ \mathcal{L}((\gamma_{j}, \beta_{j}, \alpha_{j}) \mid X_{0}, X_{\delta}, X_{2\delta}, \dots, X_{k\delta}) \}.
\end{equation}

This approach effectively decouples the optimization over $\alpha$ from that of the remaining parameters, thereby addressing identifiability issues while preserving computational tractability.


\section{Simulation-based evaluation}\label{se:experiments}

In this section, we present numerical experiments designed to evaluate the performance of our proposed method for parameter estimation in the CBSP model. These experiments aim to demonstrate both the effectiveness of the approach in accurately recovering model parameters and its robustness with respect to the choice of step size and other tuning parameters.

\subsection{Simulated environment}\label{subsec:simulation_env}

We conducted simulations across a range of controlled scenarios to evaluate the stability of the estimation procedure and to examine its behavior under varying configurations of the underlying stochastic dynamics. Although our method does not require multiple trajectories to estimate the parameters of the $\alpha$-CSBP with parameters $(\gamma, \beta, \alpha)$, we assess its empirical performance using multiple simulated paths. This allows us to characterize the distribution and robustness of the parameter estimates across a wide variety of settings.

The simulated environment is defined by two key hyperparameters: the time-step size $\delta$ and the number of time steps $n$. To assess the sensitivity of our method to the discretization level, we consider two values for the time-step size, $\delta = \frac{1}{6}$ and $\delta = \frac{1}{12}$.\footnote{In practical applications, the time-step size of the observed time series can often be rescaled. This setup is therefore intended as a general illustration of the method’s sensitivity to this hyperparameter.} The number of time steps is set to $n = 10$ and $n = 20$, allowing us to evaluate the method’s robustness with respect to the sample size and the effective time horizon, given by $t(n, \delta) = n \delta$. This dual variation enables an assessment of our method across different temporal resolutions and observation lengths for the simulated $\alpha$-CSBP process.

Each experiment is based on $m = 100$ simulated trajectories. The values of $(\gamma, \beta, \alpha)$ are selected to ensure that the probability of extinction remains low. This design choice guarantees that trajectories have observations at all time steps, which is essential for our illustrations, particularly when assessing the sensitivity of the estimation procedure to the time series length through the total duration $t(n, \delta)$.

\subsection{Numerical experiments}\label{subsec:numerical_results}

\subsubsection{Lack of identifiability}\label{subsub:identifiability}

The $\alpha$-CSBP model, parameterized by $(\gamma, \beta, \alpha)$, presents challenges related to lack of identifiability, a situation in which different sets of parameter values yield the same or nearly indistinguishable probability distributions for the observed data. In such cases, it becomes difficult to uniquely recover the true parameters based solely on the likelihood, thereby compromising the reliability of maximum likelihood estimation.

To illustrate this issue, consider the $\alpha$-CSBP model with parameters $(\gamma, \beta, \alpha) = (-6, 6, 1.5)$, using a time-step size of $\delta = 1/6$ and a total of $n = 20$ time steps. This setting serves as a counterexample to highlight how the direct application of the likelihood maximization procedure can lead to inaccurate parameter recovery.

Specifically, after simulating $m = 100$ trajectories under this parameter configuration, we attempt to estimate all three parameters simultaneously for each path in order to assess the quality and stability of the estimation procedure:

\begin{equation*}
(\gamma^{*}, \beta^{*}, \alpha^{*}) = \underset{(\gamma, \beta, \alpha) \in \mathbb{R} \times \mathbb{R}^{+} \times (1,2)}{\arg\max} \ \mathcal{L}((\gamma, \beta, \alpha) \mid X_{\delta}, X_{2\delta}, \dots, X_{k\delta}).
\end{equation*}

Despite being estimated from data simulated under the correct model, the resulting parameter estimates exhibit substantial deviations from their true values, particularly for $\gamma$ and $\alpha$, which govern the extinction probability and jump frequency components, respectively. The average estimates for these parameters are $-1.79$ for $\gamma$ and $1.25$ for $\alpha$. These discrepancies signal an underlying identifiability issue, as illustrated in \autoref{fig:density}. The empirical densities of the estimates display clear multimodal features. These patterns suggest that multiple parameter combinations produce equally plausible fits to the data, undermining the reliability of standard maximum likelihood estimation. This behavior underscores the need for a more structured and robust estimation procedure, such as the two-step strategy proposed in this work.

\begin{figure}[H]\centering
\caption{Empirical density plots of the estimated model parameters.}
\includegraphics[width=17cm]{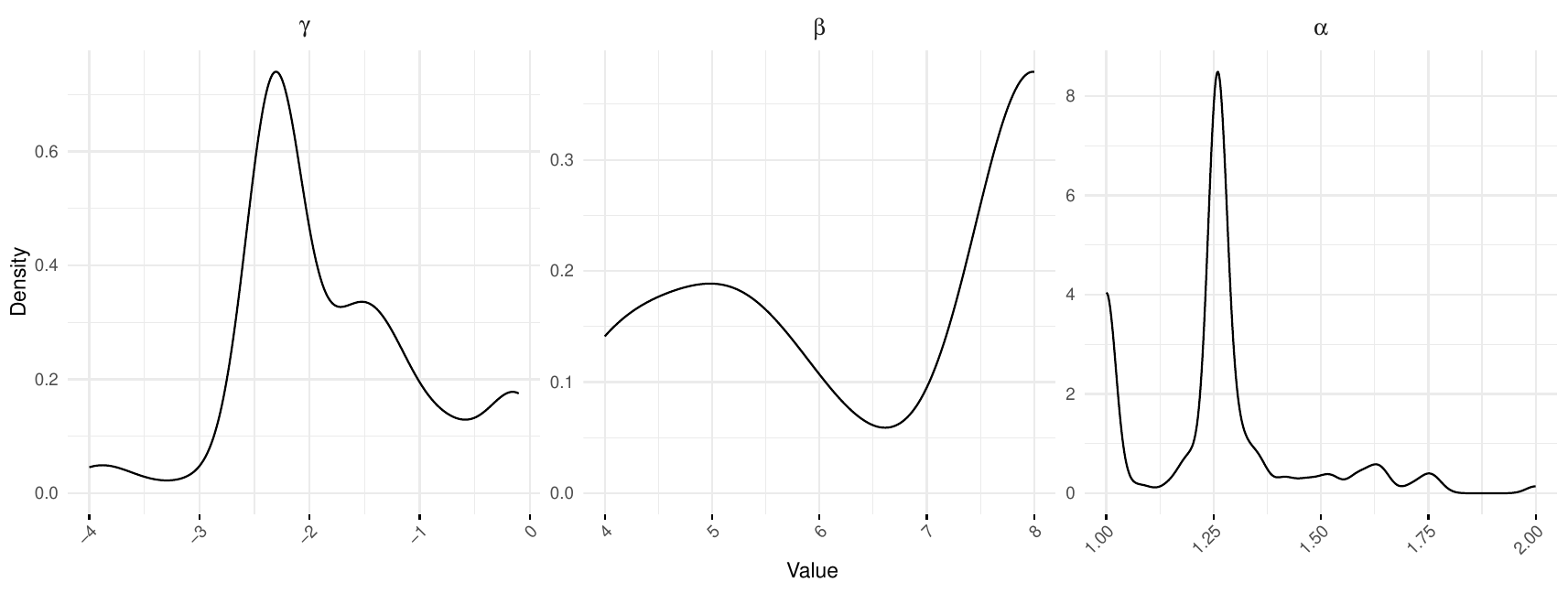}
\begin{tablenotes}
\item Parameter estimates are obtained using the maximum likelihood method and averaged over 100 independently simulated paths.
\end{tablenotes}
\label{fig:density}
\end{figure}

\subsubsection{Stability of parameter estimates across time frames and sample sizes}\label{subsub:stability_time_size}

We continue our analysis by simulating sample paths of the $\alpha$-CSBP with the parameter vector $(\gamma, \beta, \alpha) = (-6, 1/\delta, 1.5)$. This specific configuration is chosen for illustrative purposes, as it guarantees that the probability of extinction, given by Equation~\eqref{eq:probability}, remains extremely low over the two time horizons considered: $t = \frac{10}{12},\, \frac{10}{6},\, \frac{20}{12} \text{ and } \frac{20}{6}$. In all cases, the extinction probability is bounded above by $1.3631\times10^{-6}$, ensuring that the process remains active throughout the observation window and that the trajectories are informative for estimation.

The complete experimental procedure is described in Algorithm~\ref{algo:experiment_procedure}. The algorithm outlines the steps to simulate multiple paths under the known parameter configuration and, treating the parameters as unknown, estimate them independently for each simulated trajectory. 

\begin{algorithm}[H]
\caption{Simulation and estimation procedure for the $\alpha$-CSBP model.}\label{algo:experiment_procedure}
\begin{algorithmic}
\State \textbf{Input 1:} Set parameter values $(\gamma, \beta, \alpha)$.
\State \textbf{Input 2:} Define a grid of candidate $\alpha$ values for the estimation process: $\alpha \in \{1.1, 1.2, \dots, 1.9\}$.

\State Assuming 
\For{$k=1$ to $m$}
    \Comment{Procedure per-path basis}
    \State Simulate a trajectory $\left\{X^{(k)}_{i\delta}\right\}_{i=0}^{n}$ using the simulation scheme described in Section~\ref{subsec:simulation_path}.
    \For{$\alpha \in \{1.1, 1.2, \dots, 1.9\}$}
        \State Estimate values $\left(\gamma_{\alpha}^{(k)},\beta^{k}_{\alpha}\right)$ as described in Equation \eqref{eq:estimation_gamma_beta}.
    \EndFor
    \State Estimate optimal values $\left(\gamma^{(k)*}, \beta^{(k)*}, \alpha^{(k)*}\right)$ as described in Equation \eqref{eq:full_estimation}.
\EndFor
\end{algorithmic}
\end{algorithm}

Our numerical results, shown in \autoref{fig:alpha_estimates}, display the average log-likelihood values computed over a grid of $\alpha$ values, based on multiple simulated paths and following the procedure outlined in Algorithm~\ref{algo:experiment_procedure}. The results show that the value of $\alpha$ that maximizes the likelihood is consistently $\alpha = 1.5$, regardless of the combination of time-step size and sample size used in the simulation. This demonstrates that our method reliably recovers the true jump parameter $\alpha$ and that its performance is not affected by the total duration of the path, $t(n, \delta)$.

\begin{figure}[H]\centering
\caption{Mean log-likelihood over the grid of $\alpha$ values.}
\includegraphics[width=17cm]{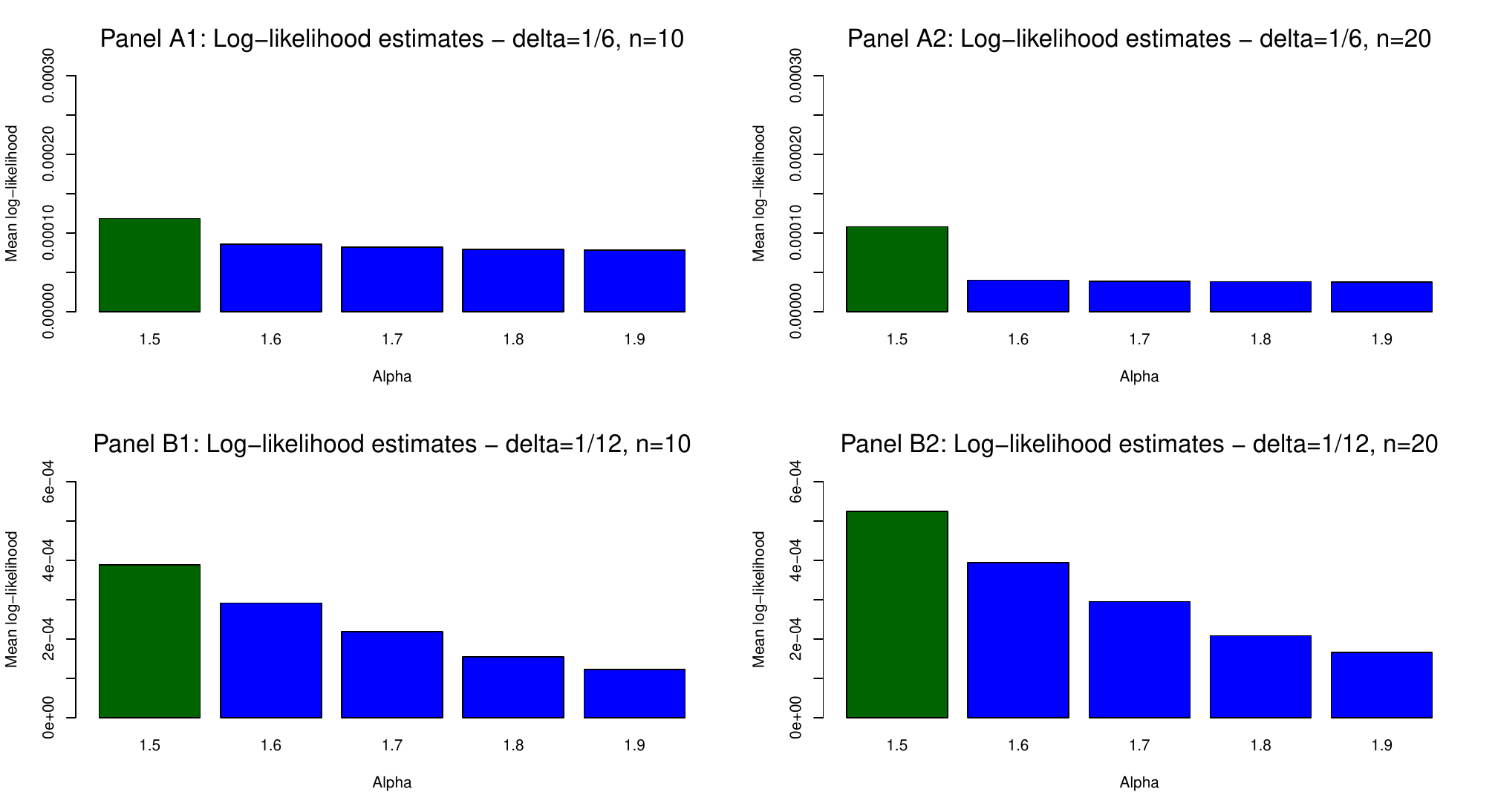}
\begin{tablenotes}
\item The results are obtained by averaging the likelihood values computed using Equation~\eqref{eq:full_estimation} across 100 simulated paths. The estimates of the parameters $(\gamma, \beta)$ depend on the value of $\alpha$ and are not shown in this figure.
\end{tablenotes}
\label{fig:alpha_estimates}
\end{figure}

As expected, the likelihood increases as the time-step size decreases and the sample size increases, reflecting reduced uncertainty in the observed path due to the availability of more information. This reduction in uncertainty with respect to the time-step size is evident in panels B1 and B2 which exhibit higher likelihood values than their counterparts, Panels A1 and A2, as a result of the time-step size reduction from $1/6$ to $1/12$. Similarly, the effect of sample size is observed in the comparison between left and right panels: Panels A1 and B1 (with $n = 10$) display lower likelihoods than Panels A2 and B2 (with $n = 20$), highlighting the impact of increased sample size on the accuracy of the likelihood estimates.

Regarding the estimates of the parameters $\gamma$ and $\beta$, \autoref{tab:beta_gamma_estimates} presents the mean estimates across all simulated paths along with their corresponding 95\% confidence intervals. The results demonstrate the robustness of our method, as reflected by consistently narrow confidence intervals for both parameters across all configurations of the hyperparameters $\delta$ and $n$.

\begin{table}[H]
\centering
\footnotesize
\caption{Mean and 95\% confidence intervals for parameters $\gamma$ and $\beta$.}
\label{tab:beta_gamma_estimates}
\renewcommand{\arraystretch}{1.3}
\resizebox{\textwidth}{!}{
\begin{tabular}{>{\centering\arraybackslash}p{3cm} >{\raggedright\arraybackslash}p{5cm} >{\raggedright\arraybackslash}p{5cm}}
\toprule
$\delta$ / $n$ & $n = 10$ & $n = 20$ \\
\midrule
\multirow{2}{*}{$\delta = \frac{1}{6}$}
& $\gamma = -5.908\ [-6.034,\ -5.781]$ & $\gamma = -5.969\ [-6.011,\ -5.926]$ \\
& $\beta = 6.031\ [5.988,\ 6.073]$ & $\beta = 6.019\ [5.988,\ 6.051]$ \\
\midrule
\multirow{2}{*}{$\delta = \frac{1}{12}$}
& $\gamma = -5.889\ [-6.015,\ -5.763]$ & $\gamma = -5.954\ [-6.044,\ -5.864]$ \\
& $\beta = 12.051\ [11.950,\ 12.152]$ & $\beta = 12.031\ [11.969,\ 12.094]$ \\
\bottomrule
\end{tabular}}
\begin{tablenotes}
\item 
Mean and 95\% confidence intervals for the estimated parameters $(\gamma, \beta)$ using the estimated $\alpha$ values from \autoref{fig:alpha_estimates}. Results are based on $m = 100$ simulated paths.
\end{tablenotes}
\end{table}

Importantly, the confidence intervals for $\gamma$ and $\beta$ consistently contain the true value used in the simulation, providing strong evidence of accurate recovery. Overall, these findings highlight the method's capacity to produce stable and accurate parameter estimates across the given range of time-step sizes and sample sizes.

\subsubsection{Method robustness to variations in the jump parameter $\alpha$}

As a second experiment, we identify the regions of $\alpha$ under which our method maintains consistent performance. More precisely, we determine the values of $\alpha$ for which the numerical inversion adopted in \cite{abate2000introduction} can be reliably implemented.

The numerical inversion of the Laplace transform proposed by \cite{abate2000introduction} is mainly driven by the parameter $A$, which controls the aliasing error. As described in \autoref{se:laplace_inversion}, $A = 2 r \ell s$ with $\ell = 1$, $s$ denotes the current value of the stochastic process, and $r$ is any value located to the right of all singularities of the Laplace transform specified in  \autoref{eq:laplace_transforme}. Based on Lemma \autoref{lema3.1}, such an $r$ can be chosen as
$$r = \left( \frac{\gamma}{\beta\left(1-e^{-(\alpha-1) \gamma \delta}\right)} \right)^\frac{1}{\alpha-1},$$
which provides an upper bound for the real part of all singularities.

For illustration purposes, let us consider the trivial case in which the current value of the stochastic process is equal to 1. In this setting, the contribution of the process value to the parameter $A$ is negligible, and therefore $A = 2r$. If we set $\gamma = -1/\delta$, the resulting value of $A$ remains largely insensitive to variations in $\beta$ and $\delta$, which, following the numerical example of \autoref{subsub:stability_time_size}, are fixed at $\beta = 6$ and $\delta=1/6$. Under this configuration, the parameter $A$ can be viewed effectively as a function of $\alpha$ only.

Our numerical results reveal that as $\alpha$ approaches 1, the value of $A$ required to guarantee the validity of the numerical Laplace inversion increases dramatically. \autoref{fig:A_param} illustrates the behavior of $A$ as a function of $\alpha$ under several configurations of the parameter $\beta$.

\begin{figure}[H]\centering
\caption{Growth of $A$ as a function of $\alpha$.}
\includegraphics[width=17cm]{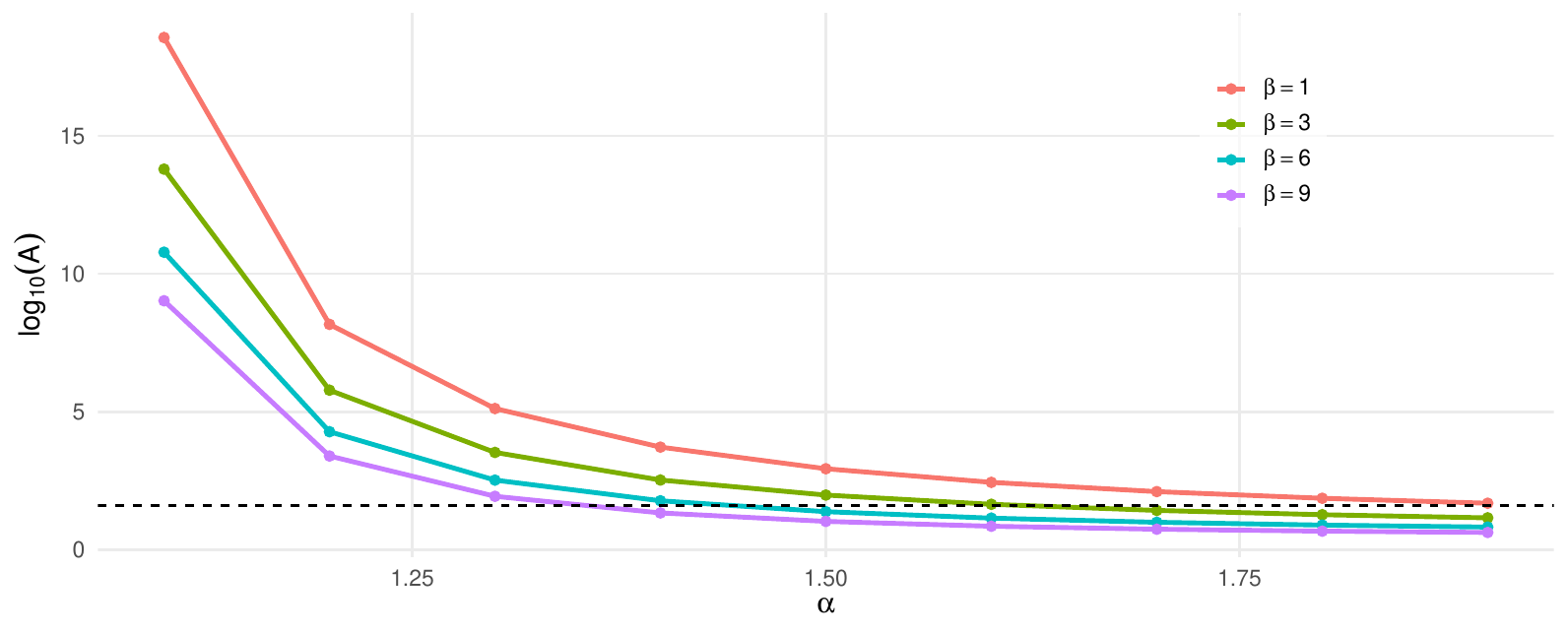}
\begin{tablenotes}
\item Values of the parameter $A$ as a function of $\alpha$ for different $\beta$ values. The sharp increase near $\alpha = 1$ indicates reduced numerical stability. The dashed line marks the region of stable robust results.
\end{tablenotes}
\label{fig:A_param}
\end{figure}

To ensure stability within the dynamic framework, where the Laplace transform must be inverted at each time step for different values of $s = x$, we consider experiments where $\alpha \geq 1.5$. This choice has consistently produced stable and robust behavior in our simulation scheme. When $\alpha$ is smaller, both the simulation and the inference procedures become susceptible to numerical instability, which may introduce bias into the final results. An alternative approach to controlling the magnitude of the parameter $A$ is to increase the value of $\beta$; however, doing so exponentially increases the probability of extinction, thereby reducing the number of observable data points over time and limiting the amount of informative data available for estimation.

Conversely, as $\alpha$ approaches 2, the numerical inversion becomes substantially more stable, and neither the simulations nor the inference procedure exhibit numerical issues. For illustration, we focus on the representative case $\alpha = 1.9$, using a time-step size of $\delta = \frac{1}{6}$ and a total of $n = 20$ time steps. While other values of $\delta$ and $n$ produce similar behavior, this configuration succinctly demonstrates the robustness of our approach when $\alpha$ is near~2. The parameter values $(\gamma, \beta)$ are set to $(-6, 6)$.
This choice of parameters, intended solely to preserve non-zero observations along the sample paths, does not limit the applicability of the estimation method. In practice, estimation remains feasible even when paths become extinct early, by using the available observations up to the extinction time. This simply results in fewer data points per trajectory, a scenario under which our method has already demonstrated reliable performance in Section~\ref{subsub:stability_time_size}.

The empirical results shown in Panel A of \autoref{fig:alpha_estimates_diff_alpha} display the average log-likelihood across all sample paths obtained by applying Algorithm~\ref{algo:experiment_procedure} under two scenarios when trajectories are simulated with $\alpha = 1.9$. The results indicate that the maximum average log-likelihood is attained at the true value of $\alpha$, successfully recovering the original parameter used in the simulation. This highlights the effectiveness of our method in identifying the correct $\alpha$ value, even when it lies near the right boundary of its domain.

Regarding the estimates of the parameters $(\gamma, \beta)$, Panel B of \autoref{tab:beta_gamma_estimates_alphas} presents the average values obtained under the estimated $\alpha$. These estimates are based on 100 simulated trajectories, enabling the construction of confidence intervals and a robust assessment of the method’s reliability and stability. The results demonstrate that our methodology consistently yields narrow confidence intervals for $(\gamma, \beta)$, successfully recovering values that are very close to those used in the simulation. This highlights the robustness of the proposed approach.

\begin{figure}[H]
\centering
\caption{Statistical results of the estimation procedure under simulations with $\alpha=1.9$}
\begin{minipage}{0.47\textwidth}
    \centering
    \caption*{{\footnotesize Panel A: Average log-likelihood over the grid of $\alpha$ values.}}
    \includegraphics[width=\textwidth]{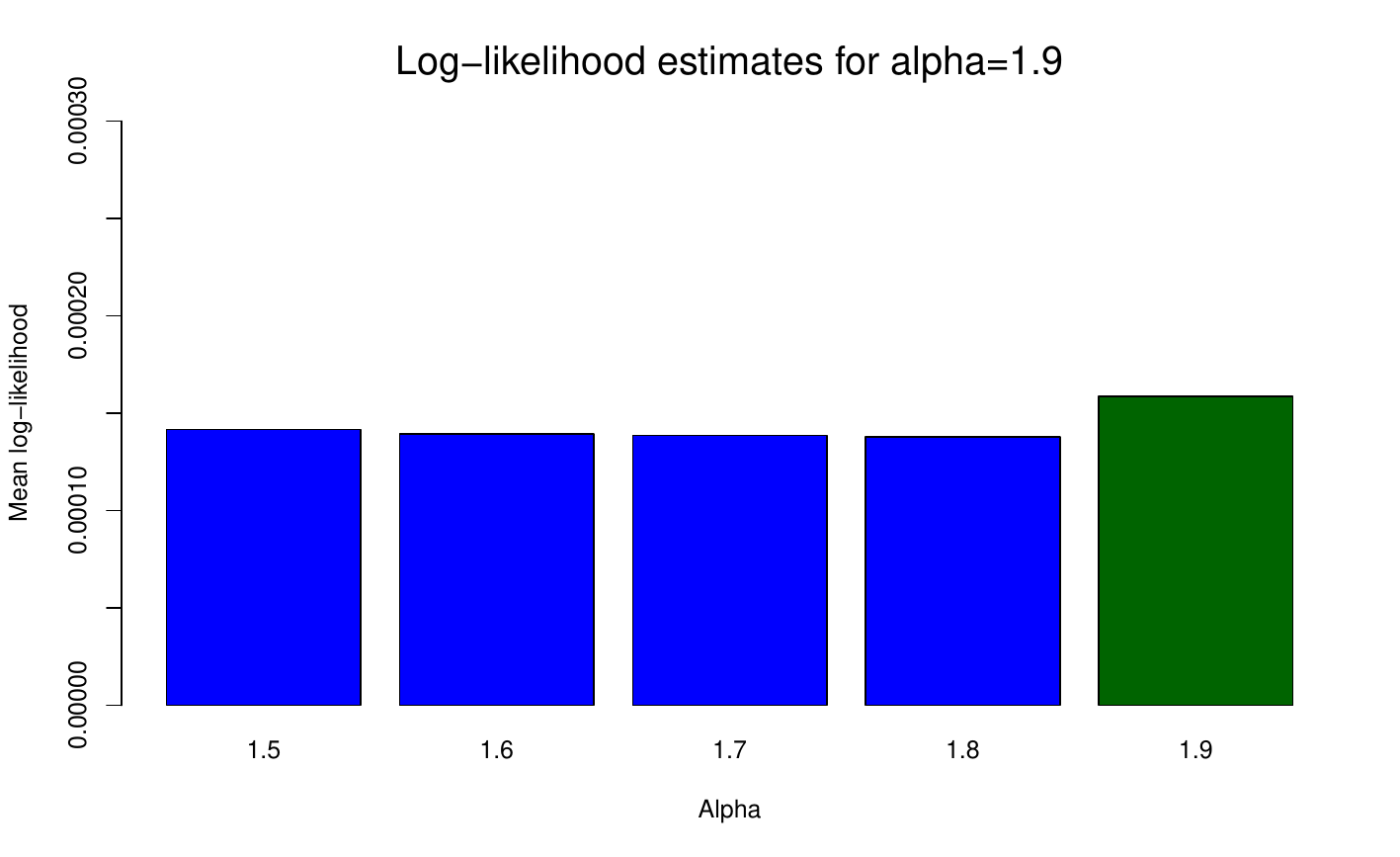}
    \begin{tablenotes}{\footnotesize
        \item The results are obtained by averaging the likelihood values computed using Equation~\eqref{eq:full_estimation} across 100 simulated paths. The estimates of the parameters $(\gamma, \beta)$ depend on the value of $\alpha$ and are not shown in this figure.}
    \end{tablenotes}
    \label{fig:alpha_estimates_diff_alpha}
\end{minipage}
\hfill
\begin{minipage}{0.47\textwidth}
    \centering
    \footnotesize
    \caption*{{\footnotesize Panel B: Mean and 95\% confidence intervals for parameters $\gamma$ and $\beta$.}}
    \label{tab:beta_gamma_estimates_alphas}
    \renewcommand{\arraystretch}{1.3}
    \resizebox{\textwidth}{!}{
    \begin{tabular}{>{\centering\arraybackslash}p{3cm} >{\raggedright\arraybackslash}p{5cm}}
        \toprule
        Time-steps & $\boldsymbol{\alpha=1.9}$ \\
        \midrule
        \multirow{2}{*}{$n = 50$}
        & $\gamma = -6.119\ [-6.254,\ -5.984]$ \\
        & $\beta = 6.067\ [5.856,\ 6.278]$ \\
        \bottomrule
    \end{tabular}}
    \begin{tablenotes}{\footnotesize
        \item Mean and 95\% confidence intervals for the estimated parameters $(\gamma, \beta)$ using the estimated $\alpha$ values from \autoref{fig:alpha_estimates_diff_alpha}. Results are based on $m = 100$ simulated paths for extreme values of $\alpha = 1.9$.}
    \end{tablenotes}
\end{minipage}
\end{figure}

\subsubsection{A Bayesian estimation framework via importance sampling}\label{subsub:bayesian}

In this section, we explore a Bayesian approach to parameter estimation using importance sampling. This methodology is introduced as a tool to estimate the distribution of the parameters $(\gamma, \beta, \alpha)$, offering an alternative to frequentist methods. The Bayesian framework enables us to incorporate prior beliefs and to quantify uncertainty in a principled manner, resulting in a full posterior distribution over the parameters of interest.

The general idea of importance sampling is to approximate the posterior distribution $p(\gamma, \beta, \alpha \mid \mathcal{D})$ by reweighting samples drawn from a proposal distribution $q(\gamma, \beta, \alpha)$. In the basic form considered here, we use the prior as the proposal, i.e., $q = p_{\text{prior}}$. This approach is appealing both for its simplicity and for its interpretability: the prior serves a dual role by representing our initial beliefs about the parameters and by providing a sampling mechanism for approximating the posterior. The posterior distribution is given by Bayes’ rule as
\begin{equation}
p(\gamma, \beta, \alpha \mid \mathcal{D}) \propto \mathcal{L}(\mathcal{D} \mid \gamma, \beta, \alpha) p(\gamma, \beta, \alpha),
\end{equation}
where $\mathcal{L}(\mathcal{D} \mid \gamma, \beta, \alpha)$ denotes the likelihood of the data given the parameters, and the prior density $p(\gamma, \beta, \alpha)$ encode any available domain knowledge or modeling assumptions. We assume that the priors for $\gamma$, $\beta$, and $\alpha$ are independent, so the joint prior factorizes as $p(\gamma, \beta, \alpha) = p(\gamma)p(\beta)p(\alpha)$.

Given $N$ i.i.d.\ samples $\{(\gamma^{(i)}, \beta^{(i)}, \alpha^{(i)})\}_{i=1}^N \sim p_{\text{prior}}$, we compute importance weights proportional to the likelihood:
\begin{equation}
w^{(i)} = \mathcal{L}(\mathcal{D} \mid \gamma^{(i)}, \beta^{(i)}, \alpha^{(i)}), \quad \tilde{w}^{(i)} = \frac{w^{(i)}}{\sum_{j=1}^N w^{(j)}}.
\end{equation}
These normalized weights define a discrete approximation of the posterior distribution. Posterior expectations of any function $g(\beta, \gamma, \alpha)$ are then approximated by a weighted average:
\begin{equation}
\mathbb{E}_{\text{post}}[g(\gamma, \beta, \alpha)] \approx \sum_{i=1}^N \tilde{w}^{(i)} g(\gamma^{(i)}, \beta^{(i)}, \alpha^{(i)}).
\end{equation}
This strategy enables full posterior inference while transparently incorporating prior information in both the modeling and computational steps.

As an illustrative example, we consider the case of an $\alpha$-CSBP model with true parameters $(\gamma, \beta, \alpha) = (-6, 6, 1.5)$. Assuming access to prior knowledge or plausible domain-informed information, we specify independent prior distributions for each parameter, which also serve as the proposal distribution in the importance sampling procedure. The prior for $\alpha$, constrained to the interval $(1, 2)$, is modeled using a transformed Beta distribution:
\begin{equation}\label{eq:alpha}
\alpha \sim 1 + \mathrm{Beta}(3, 3),
\end{equation}
which centers its mass around $1.5$. The prior for $\gamma \in \mathbb{R}$ is taken as a normal distribution:
\begin{equation}\label{eq:gamma}
\gamma \sim \mathcal{N}(-6, 1^2),
\end{equation}
reflecting strong prior belief near $-6$. Finally, the prior for $\beta > 0$ is assumed to be log-normal:
\begin{equation}\label{eq:beta}
\beta \sim \mathrm{Lognormal}(1.8, 0.1^2),
\end{equation}
which concentrates its mass near $6$ while preserving positive support.

We start our experiment by simulating $N = 1{,}000$ samples of the vector $(\gamma^{(i)}, \beta^{(i)}, \alpha^{(i)})$ from equations \eqref{eq:alpha}, \eqref{eq:gamma}, and \eqref{eq:beta}. For illustration purposes, each sample is then evaluated in the likelihood function derived from a simulated trajectory with step size $\delta = 1/24$ and $n = 50$ time steps. This allows us to compute the corresponding importance weights and obtain a weighted empirical approximation of the posterior distribution of the parameters $(\gamma, \beta, \alpha)$. This approach enables us to quantify uncertainty through probability intervals and variance estimates, and to assess how the likelihood updates our initial beliefs encoded in the prior.

\autoref{fig:importance_sampling} shows the approximate posterior density functions of the three parameters. As expected, the sample updates the prior beliefs based on the observed trajectory. It is important to note that a well-chosen initial estimate through the prior density can help alleviate the identifiability issues discussed in Section~\ref{subsub:identifiability}, allowing for accurate parameter estimation based on a single trajectory. In fact, the estimates obtained from our methodology in Section~\ref{subsub:stability_time_size} could serve as a suitable point of mass concentration for the prior distribution, facilitating posterior inference and enabling uncertainty quantification through probability intervals or posterior variances.

\begin{figure}[H]\centering
\caption{Posterior distributions.}
\includegraphics[width=17cm]{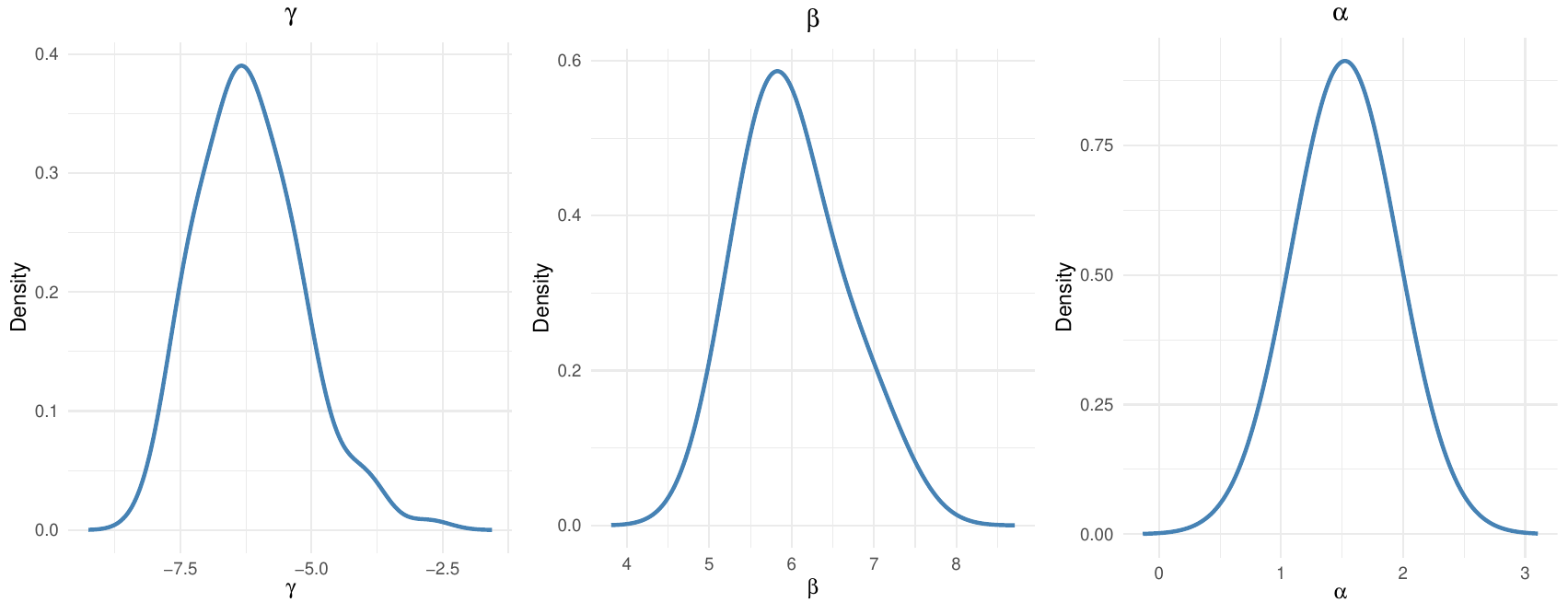}
\begin{tablenotes}
\item The results are based on $N = 1{,}000$ samples of the vector $(\gamma^{(i)}, \beta^{(i)}, \alpha^{(i)})$ generated from equations \eqref{eq:alpha}, \eqref{eq:gamma}, and \eqref{eq:beta}. The posterior distribution is estimated via importance sampling, using weights proportional to the likelihood function evaluated on a single simulated trajectory with step size $\delta = 1/24$ and $n = 50$ time steps.
\end{tablenotes}
\label{fig:importance_sampling}
\end{figure}

As an illustration, Table~\ref{tab:posterior_estimates} presents the expected value and corresponding variance of each parameter based on the posterior distribution. 

\begin{table}[H]
\centering
\footnotesize
\caption{Expected values and variances of the parameters based on the posterior distribution.}
\label{tab:posterior_estimates}
\renewcommand{\arraystretch}{1.3}
\resizebox{\textwidth}{!}{
\begin{tabular}{>{\centering\arraybackslash}p{3cm} >{\raggedright\arraybackslash}p{5cm} >{\raggedright\arraybackslash}p{5cm}}
\toprule
Parameter & Mean & Variance \\
\midrule
$\gamma$
& $-6.177\ $ & $(0.1740)^2$ \\
$\beta$
& $\phantom{-}6.013\ $ & $(0.572)^2$ \\
$\alpha$
& $\phantom{-}1.522\ $ & $(0.174)^2$ \\
\bottomrule
\end{tabular}}
\begin{tablenotes}
\item 
Results are based on $N = 1{,}000$ samples of $(\gamma^{(i)}, \beta^{(i)}, \alpha^{(i)})$ from equations \eqref{eq:alpha}, \eqref{eq:gamma}, and \eqref{eq:beta}. The posterior distribution is estimated using importance sampling with weights proportional to the likelihood of a single simulated trajectory with step size $\delta = 1/6$ and $n = 20$ time steps.
\end{tablenotes}
\end{table}


\section{Concluding remarks}\label{se:conclusion}

In this work, we developed a novel inference framework for parameter estimation in continuous-state branching processes (CSBPs), leveraging their representation via subordinators. By shifting the inference problem to the associated subordinator and using numerical inversion of the Laplace transform, we proposed a flexible and efficient likelihood-based method that remains applicable even when closed-form transition densities are unavailable.

Our approach accommodates $\alpha$-stable CSBPs, which are particularly relevant in applications involving heavy-tailed or jump-driven dynamics. To address the identifiability challenges inherent in these models, we introduced a two-step estimation procedure that separates the estimation of the jump parameter $\alpha$ from the other model parameters. This strategy leads to more stable and accurate recovery of the full parameter vector $(\gamma, \beta, \alpha)$. The ability to numerically compute the likelihood function, also allowed us to produce Bayesian inferences by means of a simple importance sampling scheme to approximate the posterior distribution of the parameters.

While our inference framework shows strong performance across most of the parameter domain, special care is required when $\alpha$ approaches 1. In this regime, the numerical inversion parameters increase rapidly, leading to potential instability in both simulation and likelihood evaluation. Although this issue can be mitigated by selecting larger values of $\beta$, thereby increasing the extinction probability, such an adjustment is often undesirable, as it reduces the amount of observable data and may limit the practical applicability of the framework. Nevertheless, our results demonstrate that the method delivers consistent and accurate estimates over a substantial portion of the $\alpha$ domain, thereby extending existing statistical inference techniques to a broader class of $\alpha$-stable CSBPs under both traditional likelihood-based and Bayesian approaches. An additional contribution of this work is the development of a reliable and efficient simulator for $\alpha$-stable CSBPs, which may serve as a valuable tool for future methodological and applied research in this family of branching processes.

Through simulation studies, we demonstrated the robustness and effectiveness of the proposed method. The results show that the estimator reliably recovers the true parameter values across a range of scenarios, including varying time-step sizes, sample sizes, and values of $\alpha$ near the boundaries of its domain. In all cases, the method produced narrow confidence intervals and consistent likelihood surfaces, even under sparse or finely discretized observations.

Overall, this framework provides a principled and scalable solution for inference in a class of branching processes that are central to both biological modeling and financial applications. Future work may explore extensions to fully Bayesian implementations, or address more general classes of CSBPs beyond the $\alpha$-stable case.

\vspace{-10pt}
\titleformat{\section}{\normalfont\large\bfseries} {\thesection}{1em}{#1}
\bibliographystyle{apalike}
\bibliography{references}  



\appendix

\section{Theoretical results for CSBP subordinators}\label{appendix}

We denote the Laplace transform 

$$
L(\lambda) =  \frac{e^{-xu_t (\lambda)}-p_t(0)}{1-p_t(0)}
$$
with $u_t(\lambda)$ given by \eqref{eq:utsol} and $p_t(0)$ given by \eqref{eq:probability} in the main text. We observe that $L$ is completely monotone.

\begin{lemma}
The singularities corresponding to the Laplace transform  \eqref{eq:lapt} with $u_t$ in \eqref{eq:utsol} are given by
$$
\left\{  z_n= \left( \frac{\gamma}{\beta\left(1-e^{-\gamma(\alpha-1)t}\right)} \right)^{\frac{1}{\alpha-1}}e^{ \frac{(\pi+2\pi n)}{\alpha-1}i } \,:\, n\in \mathbb{Z}\right\}
$$
which has finite cardinality when $\alpha-1$ is rational and infinite when it is not.
\end{lemma}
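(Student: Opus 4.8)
The plan is to trace the singularities of the Laplace transform $\mathcal{L}(\lambda)=e^{-xu_t(\lambda)}$ back to those of the exponent $u_t$. Since $z\mapsto e^{-xz}$ is entire, $\mathcal{L}$ fails to be analytic exactly where $u_t$ does, so it suffices to locate the singularities of $u_t$ as given by \eqref{eq:utsol}. Writing $u_t(\lambda)=e^{-\gamma t}\,g(\lambda)^{1/(1-\alpha)}$ with $g(\lambda)=\lambda^{1-\alpha}+\tfrac{\beta}{\gamma}\bigl(1-e^{\gamma(1-\alpha)t}\bigr)$, the key observation is that the outer exponent $1/(1-\alpha)$ is strictly negative for $\alpha\in(1,2)$; hence $|u_t(\lambda)|\to\infty$ precisely as $g(\lambda)\to 0$. (The only other candidate singular point of $u_t$ is the branch point $\lambda=0$ of $\lambda^{1-\alpha}$, but there $g(\lambda)\to\infty$ and $u_t(\lambda)\to 0$, and in any case $\lambda=0$ has zero real part, so it is irrelevant to the choice of the Bromwich abscissa $r>0$.)

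Next I would solve $g(\lambda)=0$. Rearranging gives $\lambda^{\alpha-1}=-\bigl[\tfrac{\beta}{\gamma}\bigl(1-e^{\gamma(1-\alpha)t}\bigr)\bigr]^{-1}=-\rho$, where a short sign check, splitting on the sign of $\gamma$ and taking the $\gamma\to 0$ limit, shows that $\rho:=\gamma\big/\bigl(\beta\bigl(1-e^{-\gamma(\alpha-1)t}\bigr)\bigr)$ is a strictly positive real number. Writing $-1=e^{i\pi(1+2n)}$ for $n\in\mathbb{Z}$ and extracting $(\alpha-1)$-th roots yields
\begin{equation*}
\lambda = \rho^{1/(\alpha-1)}\,e^{\,i\pi(1+2n)/(\alpha-1)},\qquad n\in\mathbb{Z},
\end{equation*}
which is exactly the claimed family $\{z_n\}$.

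Finally, for the cardinality statement I would note that every $z_n$ has the same modulus $\rho^{1/(\alpha-1)}$, so $z_n=z_{n'}$ iff $e^{2\pi i(n-n')/(\alpha-1)}=1$, i.e.\ iff $(n-n')/(\alpha-1)\in\mathbb{Z}$. If $\alpha-1$ is irrational this forces $n=n'$, so the $z_n$ are pairwise distinct and the set is infinite (indeed the arguments are dense on the circle). If $\alpha-1=p/q$ in lowest terms, then $(n-n')/(\alpha-1)\in\mathbb{Z}$ iff $p\mid n-n'$, so $z_n$ depends only on $n\bmod p$ and the set has exactly $p$ elements.

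I expect the main obstacle to be purely bookkeeping with the multivalued maps $\lambda\mapsto\lambda^{1-\alpha}$ and $w\mapsto w^{1/(1-\alpha)}$: one must be careful about branch choices when asserting that $u_t$ is singular exactly where $g$ vanishes, and about the fact that the root extraction genuinely produces the full $\mathbb{Z}$-indexed family rather than a single principal value. None of this is deep, but it is where sign and branch errors would creep in.
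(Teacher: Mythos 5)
Your argument is correct and follows essentially the same route as the paper: reduce the singularities of $e^{-xu_t(\lambda)}$ to the points where the inner expression $\lambda^{1-\alpha}+\tfrac{\beta}{\gamma}\bigl(1-e^{\gamma(1-\alpha)t}\bigr)$ vanishes, then extract the $(\alpha-1)$-th roots of $-\rho$ to get the $\mathbb{Z}$-indexed family. You in fact go slightly further than the paper's proof by verifying $\rho>0$ explicitly, noting the branch point at $\lambda=0$, and actually proving the cardinality claim (finite of size $p$ when $\alpha-1=p/q$ in lowest terms, infinite otherwise), which the paper states but does not argue.
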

\begin{proof}
We observe that when evaluating in the complex plane the singularities of $L$ coincide with the pole singularities of 
\begin{align*}
u_{t}(\lambda) &= \frac{ e^{ -\gamma t }} { \left( \frac{1}{\lambda^{\alpha - 1}} + \frac{\beta}{\gamma}(1-e^{-\gamma(\alpha-1)t } )\right)^{\frac{1}{\alpha-1}}}
\end{align*}
which become essential singularities of the form $\exp(1/z)$ at $z=0$ for $L$. Such singularities satisfy
$$
\frac{1}{\lambda^{\alpha - 1}} + \frac{\beta}{\gamma}(1-e^{-\gamma(\alpha-1)t } ) = 0
$$
so they are given by the set
$$
\left\{  z_n= \left( \frac{\gamma}{\beta\left(1-e^{-\gamma(\alpha-1)t}\right)} \right)^{\frac{1}{\alpha-1}}e^{ \frac{(\pi+2\pi n)}{\alpha-1}i } \,:\, n\in \mathbb{Z}\right\}
$$
\end{proof}

\begin{lemma}{Limit behavior at zero and infinity.}
$$
f(0^+) := \lim_{t\to 0^+ }f(t) = \infty
$$
\end{lemma}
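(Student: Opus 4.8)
The plan is to read off the behaviour of $f$ near the origin from the behaviour of its Laplace transform $L$ at infinity, exactly along the lines announced in Section~\ref{se:laplace_inversion}. Here $f$ denotes the density on $(0,\infty)$ whose Laplace transform is $L$; such an $f$ exists because $L$ is completely monotone (Bernstein's theorem), and it is in fact the density of the conditioned subordinator $S_x\mid S_x\neq 0$. The value $f(0^+)$ is then controlled by $\lambda L(\lambda)$ as $\lambda\to\infty$ via an initial-value/Tauberian argument, so the first task is to compute the asymptotics of $L(\lambda)$ as $\lambda\to\infty$.

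First I would observe that, since $1-\alpha<0$, $\lambda^{1-\alpha}\to 0$, so from \eqref{eq:utsol} we get $u_t(\lambda)\to u_t(\infty)=e^{-\gamma t}a^{1/(1-\alpha)}$ with $a:=\tfrac{\beta}{\gamma}(1-e^{\gamma(1-\alpha)t})$; a sign check shows $a>0$ for every $\gamma\in\mathbb R$ (with the obvious limiting reading when $\gamma=0$), so $u_t(\infty)$ is finite and positive and is consistent with $p_t(0)=e^{-x u_t(\infty)}$ from \eqref{eq:probability}. A first-order expansion of $y\mapsto y^{1/(1-\alpha)}$ at $y=a$ then gives $u_t(\infty)-u_t(\lambda)\sim c_1\lambda^{-(\alpha-1)}$ with the explicit constant $c_1=\tfrac{e^{-\gamma t}}{\alpha-1}a^{1/(1-\alpha)-1}>0$. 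Substituting into $L(\lambda)=\bigl(e^{-x u_t(\lambda)}-p_t(0)\bigr)/\bigl(1-p_t(0)\bigr)$ and using $e^{x(u_t(\infty)-u_t(\lambda))}-1\sim x\bigl(u_t(\infty)-u_t(\lambda)\bigr)$ yields $L(\lambda)\sim c_2\lambda^{-(\alpha-1)}$ with $c_2=\tfrac{x\,p_t(0)}{1-p_t(0)}\,c_1>0$, and hence $\lambda L(\lambda)\sim c_2\lambda^{\,2-\alpha}\to\infty$ since $2-\alpha\in(0,1)$.

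The last step is to convert this into $f(0^+)=\infty$. Because $\alpha-1\in(0,1)$, Karamata's Tauberian theorem turns $L(\lambda)=\widehat f(\lambda)\sim c_2\lambda^{-(\alpha-1)}$ into $F(s)=\int_0^s f(u)\,du\sim \tfrac{c_2}{\Gamma(\alpha)}\,s^{\alpha-1}$ as $s\to 0^+$, and the monotone density theorem upgrades this to $f(s)\sim \tfrac{c_2}{\Gamma(\alpha-1)}\,s^{\alpha-2}$, which diverges since $\alpha-2\in(-1,0)$. I expect the only genuine obstacle to be the hypothesis of the monotone density theorem, namely that $f$ be monotone on a neighbourhood of $0$. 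I would obtain this from the structure of the problem: since $u_t(\infty)<\infty$, the subordinator $S$ has finite Lévy measure and no drift, i.e.\ it is compound Poisson, so $f$ is the density of $S_x\mid S_x\neq 0$ and near the origin is asymptotically a positive constant times the Lévy density of $S$, which is regularly varying of index $\alpha-2<0$ and therefore eventually decreasing. (If one is content with the weaker statement $\limsup_{s\to 0^+}f(s)=\infty$, this is immediate: boundedness of $f$ near $0$ would force $\lambda L(\lambda)$ to stay bounded, contradicting $\lambda L(\lambda)\to\infty$; I would note this as a sanity remark but rely on the Tauberian route for the full claim.)
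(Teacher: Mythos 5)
Your argument is correct and rests on the same core computation as the paper's: both proofs reduce the claim to the asymptotics of $\lambda L(\lambda)$ as $\lambda\to\infty$, and both find $L(\lambda)\sim c\,\lambda^{-(\alpha-1)}$ (your expansion of $u_t(\infty)-u_t(\lambda)$ is exactly the derivative-at-$y=0$ computation the paper performs after substituting $y=1/\lambda$, and your constants check out). Where you diverge is the final conversion step. The paper simply invokes the initial value theorem to pass from $\lim_\lambda \lambda L(\lambda)=\infty$ to $f(0^+)=\infty$; strictly speaking that theorem is Abelian (it presupposes that $f(0^+)$ exists), so used in this direction it only delivers your ``sanity remark,'' namely $\limsup_{s\to 0^+}f(s)=\infty$. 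You instead go through Karamata's Tauberian theorem plus the monotone density theorem, which is more careful and yields the stronger, quantitative statement $f(s)\sim \tfrac{c_2}{\Gamma(\alpha-1)}s^{\alpha-2}$ — in fact this is precisely the small-$y$ asymptotic the paper records separately in its Tauberian lemma on the tail behavior of $X_\delta$, so your route unifies the two appendix lemmas. The one soft spot in your version is the monotonicity hypothesis: regular variation of index $\alpha-2<0$ does not by itself imply that the Lévy density (or $f$) is eventually decreasing near $0$, so your justification of the monotone density theorem's hypothesis is not yet complete — you would need either an explicit handle on the Lévy measure of the subordinator or a separate argument (e.g.\ complete monotonicity of the single-jump density). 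Note, however, that the paper's own proof is no more rigorous on exactly this point, and your observation that $u_t(\infty)<\infty$ forces the subordinator to be compound Poisson with the one-jump term dominating near the origin is a genuinely useful structural remark that the paper does not make.
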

\begin{proof}
By the initial value theorem for Laplace transforms
\begin{align*}
f(0^+) & = \lim_{\lambda \to \infty } \lambda L_t(\lambda ) =
\lim_{y \to 0 } \frac{ L_t\left( \frac{1}{y}  \right) }{y} 
\\ & = \frac{1}{1- p_t(0) } \lim_{y \to 0 } \frac{  e^{-x u_t\left(\frac{1}{y}\right)} - e^{-xu_t(\infty)} }{y}
\\ & = \frac{-x p_t(0) e^{-\gamma t}}{1-p_t(0) } \frac{\partial}{\partial y}\left( \frac{1}{ \left( y^{\alpha - 1} + \frac{\beta}{\gamma}(1-e^{-\gamma(\alpha-1)t } )\right)^{\frac{1}{\alpha-1}} } \right)
\Bigg|_{y=0}
\\ & = \frac{x p_t(0) e^{-\gamma t}}{1-p_t(0) } \left( \frac{y^{\alpha-2} \left( y^{\alpha - 1} + \frac{\beta}{\gamma}(1-e^{-\gamma(\alpha-1)t } )\right)^{\frac{1}{\alpha-1}-1 } }{ \left( y^{\alpha - 1} + \frac{\beta}{\gamma}(1-e^{-\gamma(\alpha-1)t } )\right)^{\frac{2}{\alpha-1} } }  \right)
\Bigg|_{y=0}
\\ & = \infty.
\end{align*}
On the other hand by the final value theorem of Laplace transforms
\begin{align*}
f(\infty) := \lim_{ t \to \infty } f(t) & = \lim_{\lambda \to 0 } \lambda L_t(\lambda ) =
\lim_{\lambda \to 0 } \lambda \left( \frac{e^{-xu_t(\lambda)}-e^{-xu_t(\infty)}}{1-e^{-xu_t(\infty)}} \right)=
\lim_{\lambda \to 0 } \lambda = 0
\end{align*}
\end{proof}

\begin{lemma}{Tail behavior of $X_\delta$.}

\begin{equation*}
    \int_0^\infty \mathbb{P}\left( X_{\delta}>y \mid X_{0}=x \right) dy= xe^{ -\gamma \delta },
\end{equation*}
and, for the process conditioned to survival,
\begin{equation*}
    \int_0^\infty \mathbb{P}\left( X_{\delta}>y \mid X_t\neq 0,X_{0}=x \right) dy= \frac{xe^{ -\gamma \delta }}{1-p_0},
\end{equation*}
and
\begin{equation*}
     \mathbb{P}\left( X_{\delta}\leq y \mid X_t\neq 0,X_{0}=x \right) dy\sim
     \frac{p_0}{1-p_0}\frac{e^{ -\gamma \delta }}{\alpha-1}\left(  \frac{\gamma}{\beta(1-e^{\gamma(1-\alpha)\delta } ) } \right)^{\frac{\alpha}{\alpha-1}}y^{{\alpha-1}}
\end{equation*}
as $y\to0$.
\end{lemma}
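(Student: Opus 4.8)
The plan is to treat the three statements separately: the first two reduce to a first-moment computation, while the small-$y$ expansion requires a Tauberian argument applied to the Laplace transform of the subordinator.

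For the first identity I would start from the layer-cake (Fubini) formula $\int_0^\infty \mathbb{P}(X_\delta > y \mid X_0 = x)\,dy = \mathbb{E}[X_\delta \mid X_0 = x]$, valid since $X_\delta \geq 0$. The conditional mean is read off from $\mathbb{E}[e^{-\lambda X_\delta}\mid X_0 = x] = e^{-x u_\delta(\lambda)}$ by differentiating at $\lambda = 0$, giving $x\,u_\delta'(0)\,e^{-x u_\delta(0)}$. From \eqref{eq:utsol} one has $u_\delta(0)=0$ (because $\lambda^{1-\alpha}\to\infty$ as $\lambda\to 0^+$ since $1-\alpha<0$) and $u_\delta'(0)=e^{-\gamma\delta}$; equivalently one may invoke the classical CSBP first-moment formula $\mathbb{E}[X_\delta\mid X_0=x]=xe^{-\Psi'(0+)\delta}$ with $\Psi'(0+)=\gamma$. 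Either route yields the first display. For the second display, for $y>0$ the event $\{X_\delta > y\}$ is contained in $\{X_\delta\neq 0\}$, so $\mathbb{P}(X_\delta > y\mid X_\delta\neq 0, X_0=x) = \mathbb{P}(X_\delta > y\mid X_0=x)/(1-p_0)$; integrating in $y$ and using the first identity gives $x e^{-\gamma\delta}/(1-p_0)$.

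For the behavior as $y\to 0$, I would use the subordinator representation: conditionally on $X_0=x$, $X_\delta \overset{d}{=} S_x$ with $\mathbb{E}[e^{-\lambda S_x}]=e^{-x u_\delta(\lambda)}$, and $S_x$ carries an atom of mass $p_0$ at $0$ and is absolutely continuous on $(0,\infty)$ with density $f^S_x$. Then $\mathbb{P}(X_\delta\leq y\mid X_\delta\neq 0, X_0=x)=(1-p_0)^{-1}\int_0^y f^S_x(s)\,ds$, and the Laplace transform of the subprobability measure $f^S_x(s)\,ds$ is $e^{-x u_\delta(\lambda)}-p_0$. Next I would expand $u_\delta$ near $\lambda=\infty$: writing $c=\tfrac{\beta}{\gamma}(1-e^{\gamma(1-\alpha)\delta})>0$, equation \eqref{eq:utsol} gives $u_\delta(\lambda)=e^{-\gamma\delta}(\lambda^{1-\alpha}+c)^{1/(1-\alpha)}$, and since $\lambda^{1-\alpha}\to 0$, a binomial expansion yields $u_\delta(\infty)-u_\delta(\lambda)\sim \tfrac{e^{-\gamma\delta}}{\alpha-1}\,c^{\alpha/(1-\alpha)}\,\lambda^{-(\alpha-1)}$. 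Consequently $e^{-x u_\delta(\lambda)}-p_0 = p_0\big(e^{x(u_\delta(\infty)-u_\delta(\lambda))}-1\big)\sim x p_0\,\tfrac{e^{-\gamma\delta}}{\alpha-1}\,c^{\alpha/(1-\alpha)}\,\lambda^{-(\alpha-1)}$ as $\lambda\to\infty$, with $c^{\alpha/(1-\alpha)}=\big(\gamma/(\beta(1-e^{\gamma(1-\alpha)\delta}))\big)^{\alpha/(\alpha-1)}$. Finally, by Karamata's Tauberian theorem for behavior at the origin (exponent $\rho=\alpha-1\in(0,1)$), $\int_0^y f^S_x(s)\,ds\sim \Gamma(\alpha)^{-1}\,x p_0\,\tfrac{e^{-\gamma\delta}}{\alpha-1}\,c^{\alpha/(1-\alpha)}\,y^{\alpha-1}$, and dividing by $1-p_0$ produces the third display.

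I expect the Tauberian step to be the main obstacle: applying Karamata in the form linking the $y\to 0^+$ behavior of $\int_0^y f^S_x$ to the $\lambda\to\infty$ behavior of its Laplace transform, and tracking the exact constant (the $\Gamma(\alpha)$ factor together with the dependence on $x$ and $p_0$), needs care; alternatively one can work with the density directly, using that $\alpha$-stable-type densities are monotone near $0$ so the monotone-density version of the Tauberian theorem gives $f^S_x(y)\sim\Gamma(\alpha-1)^{-1}\,x p_0\,\tfrac{e^{-\gamma\delta}}{\alpha-1}\,c^{\alpha/(1-\alpha)}\,y^{\alpha-2}$ and then integrate. A secondary point requiring attention is the expansion of the fractional power $(\lambda^{1-\alpha}+c)^{1/(1-\alpha)}$ at infinity, where the negative exponent $1-\alpha$ reverses the usual roles of the two summands.
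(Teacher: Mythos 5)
Your proposal follows essentially the same route as the paper: the first two identities come from the behaviour of the Laplace transform at $\lambda=0$ (the paper phrases this via a Tauberian corollary of Bingham et al., you phrase it via the layer-cake identity and the first-moment formula $\mathbb{E}[X_\delta\mid X_0=x]=xe^{-\gamma\delta}$ — the same computation), and the small-$y$ asymptotics come from expanding the conditional Laplace transform at $\lambda=\infty$ and applying Karamata/Feller's Tauberian theorem, which is precisely the paper's citation of Theorem XIII.5.3 in Feller. One substantive remark: your more careful bookkeeping yields a constant that differs from the paper's by a factor $x/\Gamma(\alpha)$, and your version appears to be the correct one. Indeed $e^{-xu_\delta(\lambda)}-p_0=p_0\bigl(e^{x(u_\delta(\infty)-u_\delta(\lambda))}-1\bigr)\sim xp_0\,(u_\delta(\infty)-u_\delta(\lambda))$, so the factor $x$ is genuinely present (the paper's displayed asymptotic for the conditional Laplace transform omits it), and Feller's theorem with $\rho=\alpha-1$ contributes the $1/\Gamma(\rho+1)=1/\Gamma(\alpha)$ that you retain but the paper's final display drops. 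A very minor point: your primary route assumes the law of $S_x$ has a density on $(0,\infty)$; this is not needed, since the Tauberian theorem applies directly to the sub-probability measure $\mathbb{P}(0<X_\delta\le y\mid X_0=x)$, which is how the paper proceeds.
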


\begin{proof}
We will apply the Tauberian theorem, see for example 
Corollary 8.1.7 in \cite{Bingham}. Since
 $u_\delta(\lambda)\sim\lambda e^{ -\gamma \delta }$ as $\lambda\to0$,
we get
$$
      1-\mathbb{E}\left[e^{-\lambda X_{\delta}} \mid X_{0}=x \right]\sim x e^{ -\gamma \delta }\lambda.
$$
On the other hand,  as $\lambda\to\infty$,
$$\mathbb{E}\left[e^{-\lambda X_{\delta}} \mid X_\delta\neq0,X_{0}=x \right]\sim 
\frac{p_0}{1-p_0}\frac{e^{ -\gamma \delta }}{\alpha-1}\left(  \frac{\gamma}{\beta(1-e^{\gamma(1-\alpha)\delta })} \right)^{\frac{\alpha}{\alpha-1}}\lambda^{{1-\alpha}}.$$
It just remains to apply Theorem XIII.5.3  in \cite{Feller}.
\end{proof}

\end{document}